\newtheorem{theorem}{Theorem}
\newtheorem{lemma}{Lemma}
\newtheorem{definition}{Definition}
\newtheorem{propose}{Proposition}
\theoremstyle{remark}
\begin{document}
\newcommand{\real}{\textrm{Re}\:}
\newcommand{\sto}{\stackrel{s}{\to}}
\newcommand{\supp}{\textrm{supp}\:}
\newcommand{\wto}{\stackrel{w}{\to}}
\newcommand{\ssto}{\stackrel{s}{\to}}
\newcounter{foo}
\providecommand{\norm}[1]{\lVert#1\rVert}
\providecommand{\abs}[1]{\lvert#1\rvert}

\title{Zero Energy Bound States in Many--Particle Systems}

\author{Dmitry K. Gridnev\footnote[1]{On leave from: Institute of Physics, St. Petersburg State
University, Ulyanovskaya 1, 198504 Russia}}

\address{FIAS, Ruth-Moufang-Stra{\ss}e 1, D--60438 Frankfurt am Main, Germany}
\ead{gridnev@fias.uni-frankfurt.de}
\begin{abstract}
It is proved that the eigenvalues in the N--particle system are absorbed at zero
energy threshold, if none of the subsystems has a bound state with $E \leq 0$
and none of the particle pairs has a zero energy resonance. The pair potentials are
allowed to take both signs.
\end{abstract}


\section{Introduction}\label{sec:1}

In \cite{1} it was proved that the 3--body system, which is at the 3--body
coupling constant threshold,
has a square integrable state at zero energy if none of the 2--body subsystems
is bound or has a zero energy resonance. The
condition on the absence of 2--body zero energy resonances is essential, that is
the 3--body ground state at zero energy can be at most a resonance and not an
$L^2$ state if at least one pair of particles has a zero energy resonance \cite{1}. One
of the restrictions on pair potentials in \cite{1} was their being
non--positive.
The aim of the present paper is to generalize the result of \cite{1} to the case
of
many particles and get rid of the restriction on the sign of  pair
potentials. The main result is expressed
in Theorems~\ref{th:1}, \ref{th:2}, which state that the eigenvalues in the N--particle system
are absorbed at zero energy threshold, if none of the subsystems has a bound
state
with $E \leq 0$ and none of the particle pairs has a zero energy resonance. Throughout
the paper we use the following operator notation.
$A\geq 0$ means that  $(f, Af) \geq 0$ for all $f \in D(A)$ and $A \ngeqq 0$
means that there exists $f_0 \in D(A)$
such that $(f_0, Af_0) < 0$.

We consider the $N$--particle Schr\"odinger operator 
\begin{equation}\label{hami}
    H(\lambda) = H_0 + \lambda \sum_{1 \leq i<j \leq N} V_{ij}(r_i - r_j),
\end{equation}
where $\lambda > 0$ is a coupling constant, $H_0$ is a kinetic energy operator
with the centre of mass removed,
$r_i \in \mathbb{R}^3$ are particle position vectors, the pair potentials are
real and $V_{ij} \in L^2 (\mathbb{R}^3) \cap L^1 (\mathbb{R}^3)$. The operator
$H(\lambda)$
is self--adjoint on $D(H_0) \subset  L^2 (\mathbb{R}^{3N-3})$, the set of
relative coordinates in $\mathbb{R}^{3N-3}$ we shall denote as $\xi$. Throughout the paper we shall assume that 
\begin{equation}
 \sigma_{ess} \bigl(H(\lambda)\bigr) = [0, \infty), 
\end{equation}
which, of course, restricts possible values of $\lambda$. Here we shall extensively use the term \textit{critical coupling}. In the literature 
one finds several related definitions: critically
bound \cite{richardmolec}, critical coupling \cite{richard}, coupling constant threshold in \cite{klaus1,klaus2},
virtual level at the threshold \cite{ahia}, etc. To avoid possible confusion 
we list three of the most popular definitions and indicate the relations between them. 

\begin{definition}\label{def1}
$ H(\lambda)$ is at critical coupling  if $H(\lambda) \geq 0$
and $H(\lambda) + \epsilon \sum_{i < j} V_{ij} \ngeqq 0$ for any $\epsilon >
0$.
\end{definition}
In the terminology of \cite{klaus1,klaus2} Def.~1 implies that $H(\lambda)$ is at the coupling constant threshold. 
The next definition due to scaling arguments is fully equivalent to Def.~1 
\begin{definition}\label{def2}
$ H(\lambda)$  is at critical coupling if $H(\lambda) \geq 0$
and $H(\lambda) - (1-\epsilon)H_0 \ngeqq 0$ for any $0 < \epsilon < 1$.
\end{definition}
So under the term \textit{critical coupling} we shall mean any of those. The next definition
can be found, for example, in \cite{ahia} 
\begin{definition}\label{def3}
$ H(\lambda)$ is said to have a virtual level at zero energy if
$H(\lambda) \geq 0$ and $H(\lambda) - \epsilon V_R \ngeqq 0$ for any $\epsilon >
0$, where $V_R := 1/(1 + |\xi|^2)$.
\end{definition}

In the case of $N=2$ it is easy to show that all three definitions are
equivalent to the definition of a two--particle zero energy resonance, c.f. \cite{sobolev,yafaev}. Note, that, in general, for $N \geq 3$ the Defs.~1--2 and Def.~3 are not
equivalent. The difference lies in the fact that the perturbation in Def.~3 does not move the lower bound
of the essential spectrum, since $V_R$ is a relatively
$H_0$--compact perturbation, contrary to the perturbations in Defs. 1--2, where
the lower bound of the essential spectrum can be moved, if some of the
subsystems are at critical coupling. 
\begin{propose}\label{prop:1}
A system of N particles is at critical coupling if it has a virtual level at zero energy. 
\end{propose}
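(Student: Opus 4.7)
The proposition asserts Definition~\ref{def3} $\Rightarrow$ Definitions~\ref{def1}--\ref{def2}. My plan is to establish a Hardy-type operator inequality relating $H_0$ and $V_R$, and then use it to convert a virtual-level minimizing sequence directly into a critical-coupling minimizing sequence.

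The main analytic input is the inequality $H_0 \geq c\, V_R$ as quadratic forms on $D(H_0)$, valid for some $c>0$ depending only on the masses and on $N$. Since $V_R(\xi)=(1+|\xi|^2)^{-1}\leq |\xi|^{-2}$ pointwise on $\mathbb{R}^{3N-3}\setminus\{0\}$, it suffices to dominate a positive multiple of $|\xi|^{-2}$ by $H_0$. After the centre of mass is removed, $H_0$ is a strictly positive-definite, constant-coefficient quadratic form in $\nabla_\xi$, hence $(f,H_0 f)\geq \mu\,\|\nabla_\xi f\|^2$ for some $\mu>0$ fixed by the mass matrix. The classical Hardy inequality in $\mathbb{R}^{3N-3}$, applicable since $3N-3\geq 3$ for any $N\geq 2$, then gives
\[
\|\nabla_\xi f\|^2 \geq \frac{(3N-5)^2}{4}\int\frac{|f(\xi)|^2}{|\xi|^2}\,d\xi \geq \frac{(3N-5)^2}{4}(f,V_R f),
\]
and chaining the inequalities yields the claim with $c=\mu(3N-5)^2/4$.

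With $H_0\geq cV_R$ in hand, the conclusion is immediate. Definition~\ref{def3} supplies, for each $n$, a nonzero $f_n$ satisfying $(f_n,H(\lambda)f_n)<(1/n)(f_n,V_R f_n)\leq (cn)^{-1}(f_n,H_0 f_n)$. Given any $0<\epsilon<1$, choose $n$ large enough that $(cn)^{-1}<1-\epsilon$; then $(f_n,[H(\lambda)-(1-\epsilon)H_0]f_n)<0$, hence $H(\lambda)-(1-\epsilon)H_0\ngeqq 0$. Combined with the hypothesis $H(\lambda)\geq 0$ shared by both definitions, this is precisely Definition~\ref{def2} of critical coupling.

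The only delicate point is verifying that the metric matrix defining $H_0$ in the chosen relative coordinates is strictly positive-definite on the full $(3N-3)$-dimensional relative subspace, so that the Hardy-step constant $\mu$ is strictly positive. This is a routine consequence of the centre-of-mass reduction---the kinetic quadratic form has no null directions once the CM translation mode is removed---but it is the step most susceptible to coordinate-dependent book-keeping errors, and it is what drives the whole argument through the clean lower bound $H_0\geq cV_R$.
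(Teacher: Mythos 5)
Your proof is correct and rests on essentially the same idea as the paper's: everything reduces to the operator inequality $H_0 \geq c\, V_R$, which the paper imports as the ``Courant identity'' and you derive explicitly from Hardy's inequality in $\mathbb{R}^{3N-3}$ (dimension $\geq 3$ for $N\geq 2$). The only cosmetic difference is that you prove Definition~\ref{def3} $\Rightarrow$ Definition~\ref{def2} directly by exhibiting witnesses $f_n$, whereas the paper argues by contraposition; the quadratic-form manipulation is the same.
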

\begin{proof}
Assume by contradiction that the system is not at critical coupling. Then there must exist $\epsilon_0 > 0$ such that $H - \epsilon_0 H_0
\geq 0$.
By the Courant identity \cite{courant,cikon}  there exists $\kappa > 0$ such that $H_0 - \kappa V_R
\geq 0$. Hence,
\begin{equation}
 H - \epsilon_0 \kappa V_R = H - \epsilon_0 H_0 + \epsilon_0 (H_0 - \kappa V_R )
\geq 0 ,
\end{equation}
which means that the system does not have a virtual level at zero energy. 
\end{proof}
As already mentioned the converse of Proposition~\ref{prop:1} is in general not true. 
Note, however, that that if a system has a zero energy bound state then it automatically has a virtual level at zero energy.

\section{Main Result}\label{sec:2}

For the formulation of Theorem~\ref{th:1} we need to impose the following requirement.
\begin{list}{R\arabic{foo}}
{\usecounter{foo}
    \setlength{\rightmargin}{\leftmargin}}
\item There exists a sequence of coupling constants
$\lambda_n \in \mathbb{R}_+$ such that  $\lim_{n\to \infty} \lambda_n =
\lambda_{cr} \in \mathbb{R}_+$, and
$H(\lambda_n) \psi_n = E_n \psi_n$, where $\psi_n \in D(H_0)$, $\|\psi_n\| =1$,
$E_n < 0$, $\lim_{n \to \infty} E_n = 0$.
\end{list}
Further in this section we shall prove the following
\begin{theorem}\label{th:1}
Suppose $H(\lambda)$ defined in (\ref{hami}) for $N \geq 3$  satisfies R1, $H(\lambda_n)$ and $H(\lambda_{cr})$ have no
subsystems, which have a bound state with $E \leq 0$, and no particle pairs at critical coupling. Then there
exists normalized $\psi_0 \in D(H_0)$ such that  $H(\lambda_{cr}) \psi_0 = 0$.
\end{theorem}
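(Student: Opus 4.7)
The plan is a compactness argument: extract a weak $L^2$ limit of $\psi_n$ and show it is the desired $\psi_0$. From $V_{ij}\in L^2(\mathbb{R}^3)\cap L^1(\mathbb{R}^3)$ one deduces (Kato--Rellich) that $\sum_{i<j}V_{ij}$ is $H_0$-bounded with arbitrarily small relative bound; inserting this into $H(\lambda_n)\psi_n=E_n\psi_n$ together with $\|\psi_n\|=1$ and $E_n\to 0$ yields uniform $H^2$-bounds on $\psi_n$. Passing to a subsequence, $\psi_n\wto\psi_0$ weakly in $L^2$; testing the eigenvalue equation against $\phi\in C_c^{\infty}(\mathbb{R}^{3N-3})$, using $V_{ij}\phi\in L^2$, $\lambda_n\to\lambda_{cr}$, and $E_n\to 0$, gives $H(\lambda_{cr})\psi_0=0$ distributionally, and self-adjointness places $\psi_0\in D(H_0)$.

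The substantive step is $\psi_0\neq 0$, which I would prove by induction on $N$ with the base case $N=3$ essentially provided by \cite{1} (modulo the signed-potentials extension). The first reduction is to observe that under the hypotheses, \emph{no} proper subsystem of $H(\lambda_{cr})$ is at critical coupling: by assumption no subsystem has a bound state with $E\le 0$, so a subsystem failing strict subcriticality would have to be at critical coupling, whereupon the induction hypothesis---applied using the inherited conditions (sub-subsystems have no $E\le 0$ bound state and none of the pairs in that subsystem is critical)---would produce a zero-energy $L^2$ eigenstate of the subsystem, contradicting the assumption. Consequently each cluster Hamiltonian satisfies $H_{C_k}(\lambda_{cr})\ge\delta_{C_k} H_0^{C_k}$ with $\delta_{C_k}>0$, and by continuity in $\lambda$ the same holds, with a uniform $\delta>0$, for $H_{C_k}(\lambda_n)$ at all large $n$; summing, $H_a(\lambda_n)\ge\delta H_0$ for every cluster decomposition $a$.

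Suppose for contradiction that $\psi_0=0$. Introduce a smooth Ruelle--Simon partition of unity $\{J_a\}$ on $\mathbb{R}^{3N-3}$ indexed by cluster decompositions together with an interior piece $J_0$ supported in $\{|\xi|\le R\}$, with $\sum_a J_a^2=1$ and $\supp J_a$ ($a\neq 0$) contained in the region where each intercluster distance is at least $cR$. The IMS identity $H(\lambda_n)=\sum_a J_a H(\lambda_n) J_a - \sum_a|\nabla J_a|^2$ has last term of order $O(R^{-2})$; on $\supp J_a$, $a\neq 0$, the intercluster interaction is small by the $L^2\cap L^1$ decay of $V_{ij}$, so $J_a H(\lambda_n) J_a\ge J_a H_a(\lambda_n) J_a-o_R(1)\ge\delta\, J_a H_0 J_a - o_R(1)$. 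By Rellich compactness $J_0\psi_n\to J_0\psi_0=0$ strongly, hence $\sum_{a\neq 0}\|J_a\psi_n\|^2\to 1$, and the expectation in $\psi_n$ reads $E_n\ge\delta\sum_{a\neq 0}(J_a\psi_n, H_0 J_a\psi_n)-O(R^{-2})-o_R(1)$.

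The main obstacle is turning this form inequality into an actual loss-of-mass contradiction. The bound $H_a\ge\delta H_0\ge 0$ only forces the intercluster kinetic energies of $J_a\psi_n$ to vanish, which is compatible with normalized sequences that spread to ever larger length scales without losing $L^2$ mass. Closing the gap will require more than positivity of $H_a$: one must exploit that $H_a$ has no $L^2$ null mode (each $H_{C_k}$ is strictly positive above $H_0^{C_k}$ and the intercluster kinetic term has purely absolutely continuous spectrum), likely by a two-scale localization in which $R=R_n\to\infty$ is tied to the rate at which $E_n\to 0$, accompanied by a careful treatment of the subleading intercluster terms. It is precisely here that \emph{strict} subcriticality of pair subsystems (the ``no pair at critical coupling'' hypothesis) should play the role that the absence of two-body zero-energy resonances played in the three-body analysis of \cite{1}.
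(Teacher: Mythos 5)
Your argument is complete only up to the point you yourself flag, and that point is precisely the theorem's content, so the proposal has a genuine gap rather than an alternative proof. The soft part (weak limit $\psi_0$, distributional equation, $\psi_0\in D(H_0)$) is fine, and the reduction showing no proper subsystem of $H(\lambda_{cr})$ is at critical coupling is essentially how the paper argues in Theorem~\ref{th:2}. But the IMS/cluster-positivity chain ends with $0>E_n\geq \delta\sum_{a}(J_a\psi_n,H_0J_a\psi_n)+o(1)$, which only yields $(\psi_n,H_0\psi_n)\to 0$; a normalized sequence with vanishing kinetic energy and vanishing total energy is exactly the zero-energy virtual-level scenario the theorem must exclude, and since $V_{ij}(r_i-r_j)$ is not relatively form-compact with respect to $H_0$ on the full space, no contradiction with $\|\psi_n\|=1$ follows from quadratic-form information alone. (A smaller flaw: ``the intercluster interaction is small on $\supp J_a$ by the $L^2\cap L^1$ decay of $V_{ij}$'' is unjustified, since $L^2\cap L^1$ gives no pointwise decay; one must instead use relative $H_0$-compactness of $V_{ls}|J_a|^2$ together with $\psi_n\wto 0$, as the paper does in the Appendix.)

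The paper closes exactly this gap by a different mechanism. It first invokes the dichotomy of Theorem~1 of \cite{1}: if no zero-energy eigenstate exists, $\psi_n$ totally spreads and $\psi_n\wto 0$. It then rewrites the eigenvalue equation as a Birman--Schwinger-type pointwise integral inequality (\ref{e8}) for $|\psi_n|$, built from positivity-preserving resolvents and the weight operators $B_\tau(k_n)$, which regularize the zero-energy singularity of $(H_0+k_n^2)^{-1}$ in the directions transverse to the pair $\tau$. The hypotheses enter quantitatively there: ``no pair at critical coupling'' gives the uniform norm bounds of Lemmas~\ref{lem:1} and \ref{lem:2}, while ``no subsystem bound state with $E\leq 0$'' feeds the no-clustering Theorem~\ref{lem:5}, which shows $(\psi_n,|v_\delta|\psi_n)\to 0$ for every pair when $\psi_n$ totally spreads. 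Plugging this into (\ref{e8}) forces $\|\psi_n\|\to 0$, contradicting normalization. In other words, the step you label as ``the main obstacle'' is not a refinement of the IMS estimate (which cannot distinguish an $L^2$ zero-energy state from a spreading resonance) but a separate pointwise-resolvent argument plus an inductive no-clustering theorem; without supplying something of that strength, your proof does not go through.
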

The next statement can be considered as a corollary to Theorem~\ref{th:1}. 
\begin{theorem}\label{th:2}
Suppose that $N \geq 3$ and $H(\lambda_{cr})$ is at critical coupling. Suppose also that $H(\lambda_{cr})$ has no
subsystems, which have a bound state with $E \leq 0$, and no particle pairs at critical coupling. 
Then there
exists normalized $\psi_0 \in D(H_0)$ such that  $H(\lambda_{cr}) \psi_0 = 0$.
\end{theorem}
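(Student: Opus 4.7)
The plan is to deduce Theorem~\ref{th:2} from Theorem~\ref{th:1} by constructing a sequence $(\lambda_n,\psi_n,E_n)$ that satisfies R1 and inherits the subsystem hypotheses of Theorem~\ref{th:1}. By Definition~\ref{def1}, the critical coupling assumption gives $H(\lambda_{cr}+\epsilon) = H(\lambda_{cr}) + \epsilon\sum_{i<j} V_{ij}\ngeqq 0$ for every $\epsilon>0$, so any sequence $\lambda_n>\lambda_{cr}$ with $\lambda_n\downarrow\lambda_{cr}$ automatically satisfies $H(\lambda_n)\ngeqq 0$.

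Next I would show that for $\lambda_n$ sufficiently close to $\lambda_{cr}$ the subsystem assumptions of Theorem~\ref{th:1} carry over to $H(\lambda_n)$. Because $V_{ij}\in L^2\cap L^1$ is $H_0$-compact, the Hamiltonian of every proper subsystem depends norm-resolvent continuously on $\lambda$, so its isolated eigenvalues move continuously in $\lambda$; since by hypothesis no subsystem of $H(\lambda_{cr})$ has an eigenvalue in $(-\infty,0]$, the same remains true at $\lambda_n$ for $n$ large. Pair critical coupling values form a discrete subset of $\mathbb{R}_+$ not containing $\lambda_{cr}$ (by hypothesis), hence a punctured neighborhood avoids them and we may pick our $\lambda_n$ inside it. An HVZ-type argument then gives $\sigma_{ess}(H(\lambda_n))=[0,\infty)$; together with $H(\lambda_n)\ngeqq 0$, the min-max principle supplies a normalized eigenstate $\psi_n\in D(H_0)$ with $H(\lambda_n)\psi_n=E_n\psi_n$ and $E_n<0$.

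It remains to verify $E_n\to 0$. The function $\lambda\mapsto E(\lambda):=\inf\sigma(H(\lambda))$ is concave, being the infimum of the affine family $\lambda\mapsto (\psi,H(\lambda)\psi)/\|\psi\|^2$. Since $E(\lambda_{cr})=0$ and $E(\lambda)\le 0$ for $\lambda$ slightly above $\lambda_{cr}$, concavity forces right-continuity at $\lambda_{cr}$, giving $E_n\to 0^-$. The triple $(\lambda_n,\psi_n,E_n)$ then satisfies R1, together with the subsystem hypotheses at both $\lambda_n$ and $\lambda_{cr}$, so Theorem~\ref{th:1} delivers the required normalized $\psi_0\in D(H_0)$ with $H(\lambda_{cr})\psi_0=0$.

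The hard part is the perturbation step. A priori, pushing $\lambda$ slightly above $\lambda_{cr}$ could shove some subsystem's lowest spectral point through zero, thereby moving the bottom of $\sigma_{ess}(H(\lambda_n))$ below $0$ and invalidating the min-max step, or could drive some pair into critical coupling. Both scenarios are ruled out by combining the two nondegeneracy hypotheses at $\lambda_{cr}$ with continuous dependence of the subsystem spectra on $\lambda$; tracking the HVZ threshold across the perturbation is where the real work lies.
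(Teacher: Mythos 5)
Your overall strategy (perturb the coupling upward, produce negative--energy eigenstates with $E_n\to 0$, and feed the sequence into Theorem~\ref{th:1}) is the same route the paper takes, and your concavity argument for $E_n\to 0$ is fine. But there is a genuine gap in the ``perturbation step,'' and it is exactly the point you yourself flag as the hard part. The hypotheses of Theorem~\ref{th:2} exclude only (i) subsystems with a bound state at $E\leq 0$ and (ii) \emph{pairs} at critical coupling; they do \emph{not} exclude an intermediate subsystem ($3\leq$ size $\leq N-1$) being at critical coupling at $\lambda_{cr}$. Your claim that ``no subsystem eigenvalue in $(-\infty,0]$ at $\lambda_{cr}$ plus norm--resolvent continuity implies the same at nearby $\lambda_n$'' is false as a general principle: continuity controls isolated eigenvalues that already exist, but it does not prevent new eigenvalues from being emitted from the threshold $0$ of the essential spectrum under an arbitrarily small increase of $\lambda$ --- that is precisely the eigenvalue--absorption phenomenon this paper is about (the two--body resonance case is the standard counterexample). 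If some subsystem is at critical coupling at $\lambda_{cr}$, then for \emph{every} $\lambda_n>\lambda_{cr}$ that subsystem has a negative bound state, so the subsystem hypothesis of Theorem~\ref{th:1} fails at $\lambda_n$ and, by HVZ, $\inf\sigma_{ess}\bigl(H(\lambda_n)\bigr)<0$, which destroys your min--max step. What actually gives stability is the quantitative condition from Definition~\ref{def2}: ``not at critical coupling'' means $\epsilon H_0^{sub}+\lambda_{cr}\sum V\geq 0$ for some $\epsilon>0$, and this survives a small increase of $\lambda$ by scaling; your weaker hypothesis (no eigenvalues $\leq 0$) does not. Relatedly, your assertion that pair--critical coupling constants form a discrete set is unproved and unnecessary; the openness of ``not at critical coupling'' under upward perturbation is what one should use.

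The paper closes the loophole you leave open by a separate argument: first it proves the theorem under the additional assumption that \emph{no} subsystem is at critical coupling (then one may choose $\lambda_n=\lambda_{cr}(1+\epsilon_0/n)$ with $\epsilon_0$ small so that all subsystem conditions persist, and Theorem~\ref{th:1} applies); then, if some subsystem were at critical coupling, it passes to a minimal such subsystem $\mathcal{S}$ (which has no critically coupled subsystems, pairs being excluded by hypothesis), applies the first part to $\mathcal{S}$, and concludes that $\mathcal{S}$ would possess a zero--energy bound state --- contradicting the hypothesis that no subsystem has a bound state with $E\leq 0$. To repair your proof you need to add this reduction (or an equivalent induction on the number of particles); without it, the case of a critically coupled intermediate subsystem is simply not covered by your argument.
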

\begin{proof}
Let us assume that none of the subsystems is at critical coupling. 
On one hand, from  the HVZ theorem
\cite{reed,cikon} it follows that there exists $\epsilon_0 >0 $
such that for $\lambda_n = \lambda_{cr} (1 + \epsilon_0 /n)$ and
$n = 1,2, \ldots$ we have $\inf \sigma_{ess} H(\lambda_n) = 0$.  We also choose $\epsilon_0$ small enough to guarantee that 
$H(\lambda_n)$ has no subsystems that are either bound or at critical coupling. On the other
hand, $H(\lambda_n) \ngeqq 0$. Therefore, there are
$\psi_n \in D(H_0)$ such that $H(\lambda_n) \psi_n = E_n \psi_n$, where $E_n <  0$,
$\|\psi_n \| = 1$ and $E_n \to 0$. Now the statement follows from Theorem~\ref{th:1}. It remains to get rid of the assumption that there is no 
subsystems at critical coupling. If there would be such then it is always possible to pass to the corresponding subsystem (call it $\mathcal{S}$), 
which has no subsystems at critical coupling. In such case by the above analysis $\mathcal{S}$ must have a bound state with $E=0$, which is in contradiction with the theorem conditions.  
\end{proof}

Following \cite{1} let us introduce the operator $B_{\tau_1 \tau_2} (z)$, where $1 \leq \tau_1 < \tau_2 \leq N$. 
We construct $B_{12}(z)$, for other particle pairs the construction is analogous.

We use Jacobi coordinates \cite{messiah} $\xi = (x, y_1, y_2, \ldots, y_{N-2})$,
where $x, y_i \in \mathbb{R}^3$. We set $x = \alpha^{-1} (r_2 - r_1)$ and $y_1 = (\sqrt{2 M_{12}}/\hbar)\bigl[ r_3 -
m_1/(m_1+m_2) r_1 - m_2/(m_1+m_2) r_2\bigr]$, 
where $\alpha := \hbar /\sqrt{2\mu_{12}} $, $M_{12} := (m_1 + m_2)m_3 / (m_1 + m_2 + m_3)$ and 
$\mu_{ik} := m_i m_k /(m_i + m_k)$ is the reduced mass. For $N=4$ this choice of coordinates is illustrated in Fig.~1 (Left). 
The coordinate $y_i \in \mathbb{R}^3$ is proportional to the
vector pointing from the centre of mass of the particles $[1, 2, \ldots , i+1]$
to the particle $i+2$, and the scale is set  to make the kinetic energy operator take the form
\begin{equation}\label{ay4}
    H_0 = - \Delta_x - \sum_i \Delta_{y_i} .
\end{equation}
Let $\mathcal{F}_{12}$ denote the partial Fourier transform in
$L^2(\mathbb{R}^{3N-3})$ acting as follows
\begin{equation}\label{ay5}
\fl \hat f(x,p_y)  =  \mathcal{F}_{12} f = \frac 1{(2 \pi )^{(3N-6)/2}} \int
d^{3N-6} y \;\; e^{-ip_y \cdot \; y} f(x,y) ,
\end{equation}
where $y = (y_1 , \ldots, y_{N-2}), \; p_y = (p_{y_1}, p_{y_2}, \ldots ,
p_{y_{N-2}}) \in \mathbb{R}^{3N-6}$.
Then $B_{12}(z)$ is defined through
\begin{equation}\label{ay6}
B_{12}(z) = 1 + z  + \mathcal{F}^{-1}_{12} t(p_y) \mathcal{F}_{12},
\end{equation}
where
\begin{equation}\label{ay669}
t(p_y) = \left(\sqrt{|p_y|} - 1\right)\chi_{\{p_y | \; |p_y| \leq 1\}}  , 
\end{equation}
$|p_y| = \bigl(\sum_i p_{y_i}^2 \bigr)^{1/2}$ and $\chi_\Omega$ denotes the characteristic function of the set $\Omega$.	
Let us transform the coordinates through
$\tilde y_i =\sum_k T_{ik} y_k$, where $T_{ik}$ is any orthogonal $(N-2)\times(N-2)$
matrix.
It is easy to check that
the construction of $B_{12} (z)$ is invariant with respect to these coordinate
transformations. That is 
\begin{equation}\label{invarianz}
 B_{12}(z) = 1 + z  + \tilde \mathcal{F}^{-1}_{12} t(\tilde p_y) \tilde \mathcal{F}_{12}  , 
\end{equation}
where $\tilde \mathcal{F}_{12}$ is defined through 
\begin{equation}\label{four2}
\fl \hat f(x,\tilde p_y)  =  \tilde \mathcal{F}_{12} f= \frac 1{(2 \pi )^{(3N-6)/2}} \int
d^{3N-6} \tilde y \;\; e^{-i \tilde p_y \cdot \; \tilde y} f(x,\tilde y) . 
\end{equation}
Similarly, one defines $B_{\tau_1 \tau_2} (z)$ for all particle pairs.
$B_{\tau_1 \tau_2} (z)$ and $B^{-1}_{\tau_1 \tau_2}  (z)$ are analytic on $\real z > 0$.

\begin{proof}[Proof of Theorem~\ref{th:1}]
By contradiction, let us assume that the zero energy bound state does not exist.
Then by Theorem~1 in \cite{1} $\psi_n$ totally spreads and $\psi_n \wto 0$ (for the definition of spreading see \cite{1}). 
Let $\tau = 1,2, \ldots, N(N-1)/2$ for $N \geq 4$ label  all particle pairs and $\tau_1 <
\tau_2$ label the particle numbers entering the pair $\tau$. We shall denote
$v_\tau := V_{\tau_1 \tau_2}$. It is helpful to split $v_\tau $ into positive
and negative parts $v_\tau = (v_\tau)_+ - (v_\tau)_-$, where $(v_\tau)_+ :=
\max[0, v_\tau]$ and
$(v_\tau)_- := \max[0, -v_\tau]$.
On one hand, the Schr\"odinger equation for $\psi_n$ reads
\begin{equation}\label{e0}
 \Bigl( H_0 + \lambda_n U_+ + k_n^2 \Bigr)\psi_n = \lambda_n \sum_\tau
\sqrt{(v_\tau)_-} \bigl( \sqrt{(v_\tau)_-} \psi_n \bigr) ,
\end{equation}
where we set
\begin{equation}\label{U+}
 U_+ := \sum_\tau (v_\tau)_+ .
\end{equation}
Acting on the last equation with an inverse operator gives
\begin{equation}\label{e1}
 \psi_n = \lambda_n \sum_\tau \Bigl( H_0 + \lambda_n U_+ + k_n^2 \Bigr)^{-1}
\sqrt{(v_\tau)_-} \Bigl( \sqrt{(v_\tau)_-} \psi_n \Bigr)  . 
\end{equation}
On the other hand, we can rearrange the terms in the Schr\"odinger equation as
follows
\begin{equation}\label{e2}
 \Bigl[ H_0 + \lambda_n(v_\tau)_+ + k_n^2 \Bigr] \psi_n = \lambda_n (v_\tau)_-
\psi_n - \lambda_n \sum_{\delta \neq \tau} v_\delta \psi_n ,
\end{equation}
where index $\delta$ runs through all particle pairs. This gives us
\begin{equation}\label{e3}
\fl  \psi_n = \lambda_n \Bigl[ H_0 + \lambda_n(v_\tau)_+ + k_n^2 \Bigr]^{-1}
(v_\tau)_- \psi_n - \lambda_n \sum_{\delta \neq \tau} \Bigl[ H_0 +
\lambda_n(v_\tau)_+ +k_n^2 \Bigr]^{-1} v_\delta \psi_n  . 
\end{equation}
Using (\ref{e3}) we get the following expression for the last term in brackets
in (\ref{e1})
\begin{eqnarray}
 \fl \sqrt{(v_\tau)_-} \psi(\lambda_n) = -\lambda_n \sum_{\delta \neq \tau} \Bigl\{
1 - \lambda_n \sqrt{(v_\tau)_-} \Bigl[ H_0 + \lambda_n(v_\tau)_+ + k_n^2
\Bigr]^{-1} \sqrt{(v_\tau)_-} \Bigr\}^{-1} \nonumber\\
\sqrt{(v_\tau)_- }\Bigl[ H_0 + \lambda_n(v_\tau)_+ +k_n^2 \Bigr]^{-1} v_\delta
\psi_n   . \label{e4b}
\end{eqnarray}
That the inverse of the operator in curly brackets makes sense would be shown in Lemma~\ref{lem:1} below.
Substituting (\ref{e4b}) into (\ref{e1}) yields the equation
\begin{eqnarray}
\fl \psi_n = -\lambda^2_n \sum_\tau \sum_{\delta \neq \tau} \Bigl( H_0 + \lambda_n
U_+ + k_n^2 \Bigr)^{-1} \sqrt{(v_\tau)_-} \Bigl\{  1 - \lambda_n \sqrt{(v_\tau)_-} \Bigl[ H_0 + \lambda_n(v_\tau)_+ +
k_n^2 \Bigr]^{-1} \nonumber\\
 \sqrt{(v_\tau)_-} \Bigr\}^{-1} \sqrt{(v_\tau)_- }\Bigl[ H_0 +
\lambda_n(v_\tau)_+ +k_n^2 \Bigr]^{-1} v_\delta \psi_n  \label{e5b}
\end{eqnarray}
All operators under the sum except $v_\delta$ are positivity preserving, see
\cite{lpestim,reed,cikon}. The inverse of the operator in curly brackets
being positivity preserving can be seen from its expansion in von Neumann
series, see and Lemma~12 in \cite{1} and Lemma~1 of this paper.
Thus we can transform (\ref{e5b}) into the following inequality
\begin{eqnarray}
\fl | \psi_n |\leq \lambda^2_n \sum_\tau \sum_{\delta \neq \tau} \Bigl( H_0 +
\lambda_n U_+ + k_n^2 \Bigr)^{-1} \sqrt{(v_\tau)_-} \Bigl\{  1 - \lambda_n \sqrt{(v_\tau)_-} \Bigl[ H_0 + \lambda_n(v_\tau)_+ +
k_n^2 \Bigr]^{-1}  \nonumber \\
\sqrt{(v_\tau)_-} \Bigr\}^{-1} \sqrt{(v_\tau)_- }\Bigl[ H_0 +
\lambda_n(v_\tau)_+ +k_n^2 \Bigr]^{-1}  |v_\delta| |\psi_n| .  \label{e6b}
\end{eqnarray}
Note that $\Bigl( H_0 + \lambda_n U_+ + k_n^2 \Bigr)^{-1} $ and $\Bigl[ H_0 +
\lambda_n(v_\tau)_+ + k_n^2 \Bigr]^{-1} $ are integral operators, see
\cite{lpestim}, and
positivity preserving operators, see, for example, \cite{reed} (Example 3 from
Sec.~IX.7 in vol. 2 and Theorem~XIII.44 in vol. 4).
By resolvent identities
\begin{eqnarray}
\fl  \Bigl( H_0 + k_n^2 \Bigr)^{-1} - \Bigl( H_0 + \lambda_n U_+ + k_n^2 \Bigr)^{-1}
= \lambda_n  \Bigl( H_0 + \lambda_n U_+ + k_n^2 \Bigr)^{-1} U_+   \Bigl( H_0 +
k_n^2 \Bigr)^{-1}  . \label{star1}\\
\fl \Bigl( H_0 + k_n^2 \Bigr)^{-1} - \Bigl( H_0 + \lambda_n(v_\tau)_+ + k_n^2
\Bigr)^{-1} = \lambda_n  \Bigl( H_0 + \lambda_n (v_\tau)_+  + k_n^2 \Bigr)^{-1}
(v_\tau)_+   \Bigl( H_0 + k_n^2 \Bigr)^{-1} .   \label{star2}
\end{eqnarray}
the differences on the lhs of (\ref{star1})--(\ref{star2}) are positivity
preserving operators. Therefore, we can rewrite (\ref{e6b}) as
\begin{eqnarray}
\fl | \psi_n |\leq \lambda^2_n \sum_\tau \sum_{\delta \neq \tau} \Bigl( H_0 + k_n^2
\Bigr)^{-1} \sqrt{(v_\tau)_-} \Bigl\{  1 - \lambda_n \sqrt{(v_\tau)_-} \Bigl[ H_0 + \lambda_n(v_\tau)_+ +
k_n^2 \Bigr]^{-1}  \nonumber\\
\sqrt{(v_\tau)_-} \Bigr\}^{-1} \sqrt{(v_\tau)_- }\Bigl[ H_0
+k_n^2 \Bigr]^{-1}  |v_\delta| |\psi_n| .  \label{e7b}
\end{eqnarray}

\begin{figure}[t]
\centering
\includegraphics[width=0.7\textwidth]{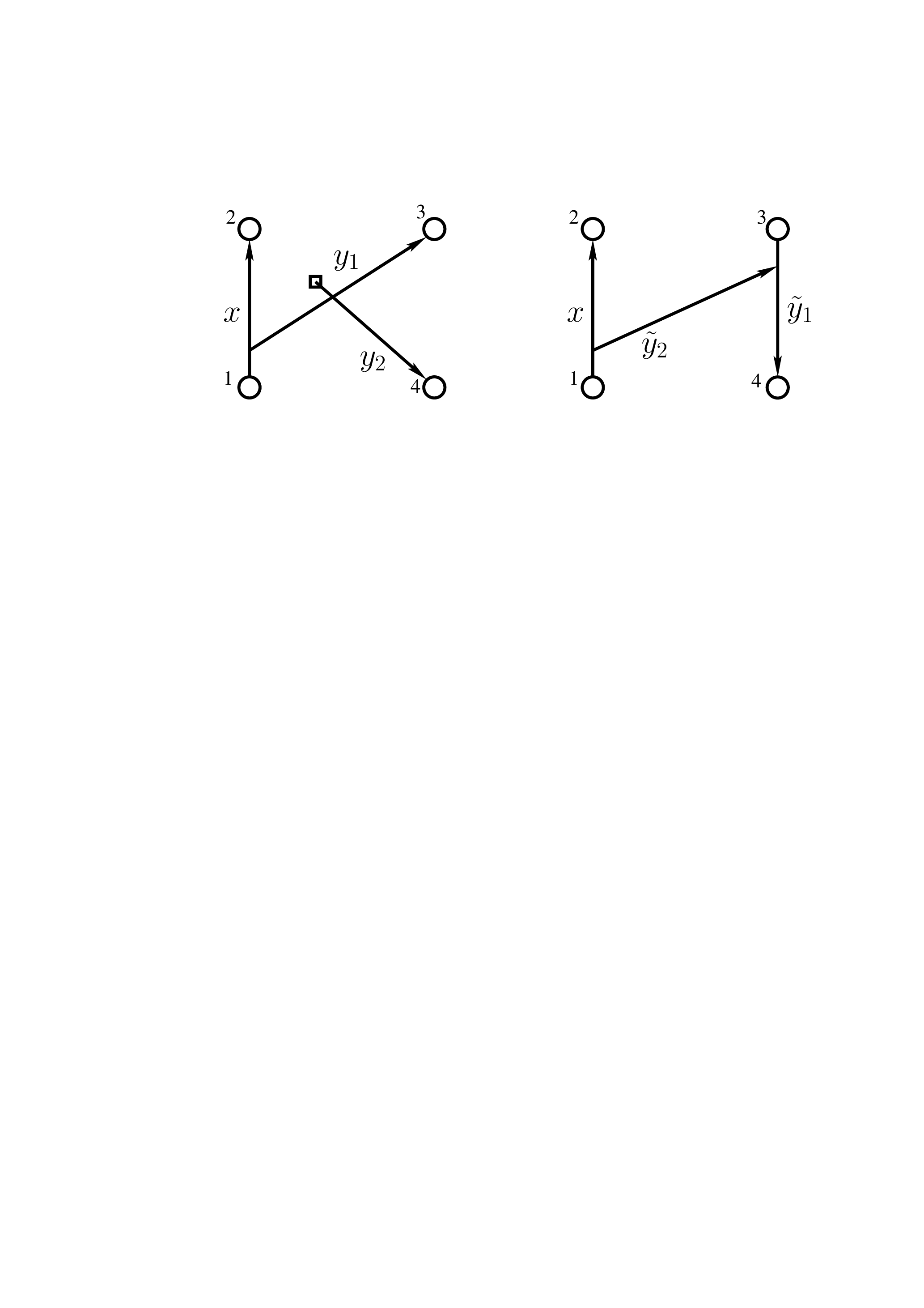}
\caption{Illustration to the choice of Jacobi coordinates for $N=4$. Left: $y_1$
points in the direction from the centre of mass of particles [1,2] to the particle
3 and $y_2$
points in the direction from the centre of mass of particles [1,2,3] (symbolized
by a square) to the particle 4. Right: $\tilde y_2$ points in the direction from
the centre
of mass of particles [1,2] to the centre of mass of particles [3,4]. The
coordinates' scales are set to make Eq.~(\ref{ay4}) hold.}
\label{fig:graph}
\end{figure}

We use the notation $B_\tau (z) \equiv B_{\tau_1 \tau_2} (z)$, where $B_{\tau_1 \tau_2} (z)$ was defined above. Inserting into (\ref{e7b}) the identity
$1 = B_\tau (k_n) B^{-1}_\tau (k_n)$ and using that $[B_\tau (k_n), H_0] = 0$
and $[B_\tau (k_n), (v_\tau)_\pm] = 0$ we obtain
\begin{equation}\label{e8}
 | \psi_n |\leq \lambda^2_n \sum_\tau \sum_{\delta \neq \tau} A_\tau (k_n)
R_\tau (k_n) D_{\tau ; \delta} (k_n) \sqrt{|v_\delta|} |\psi_n| ,
\end{equation}
where we defined the operators
\begin{eqnarray}
 A_\tau (k_n) := \Bigl( H_0 + k_n^2 \Bigr)^{-1} \sqrt{(v_\tau)_-} B_\tau(k_n)  , 
\label{e9a}\\
R_\tau (k_n) := \Bigl\{  1 - \lambda_n \sqrt{(v_\tau)_-} \Bigl[ H_0 +
\lambda_n(v_\tau)_+ + k_n^2 \Bigr]^{-1} \sqrt{(v_\tau)_-} \Bigr\}^{-1}  , 
\label{e9b}\\
D_{\tau ;\delta}  (k_n) := \sqrt{(v_\tau)_- }\Bigl[ H_0  +k_n^2 \Bigr]^{-1}
B^{-1}_\tau (k_n) \sqrt{|v_\delta|}   \quad \quad (\tau \neq \delta)   . \label{e9c}
\end{eqnarray}
Note that $H(\lambda_{cr})$ satisfies the conditions of Theorem~\ref{lem:5} in Appendix. It is easy to see that $(\psi_n, H(\lambda_{cr})\psi_n) \to 0$, where $\psi_n$ 
totally spreads. Hence, by Theorem~\ref{lem:5} we have $\| \sqrt{|v_\delta|} \psi_n\| = (\psi_n , |v_\delta| \psi_n)\to 0$. 
Applying  Lemmas~\ref{lem:1}, \ref{lem:2} to the rhs of (\ref{e8}) tells us that 
it goes to zero in norm, which is a
contradiction, since $\| \psi_n \| = 1$ by R1.
\end{proof}
\begin{lemma}\label{lem:1}
 The operators $A_\tau (k_n) , R_\tau (k_n)$ given by
(\ref{e9a})--(\ref{e9b}) are uniformly norm--bounded.
\end{lemma}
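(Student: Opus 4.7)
The plan is to pass to the partial Fourier transform $\mathcal{F}_{12}$ in the variables $y=(y_1,\ldots,y_{N-2})$. Since $(v_\tau)_\pm$ depend only on $x$ while $B_\tau(k_n)$ is a Fourier multiplier in $p_y$, both $A_\tau(k_n)$ and $R_\tau(k_n)$ become direct integrals over $p_y\in\mathbb{R}^{3N-6}$ of operators acting fiberwise on $L^2(\mathbb{R}^3_x)$. Uniform norm boundedness then reduces to uniform boundedness of the fiber norms, which are genuine two--body objects parametrized by the shifted energy $\mu^2:=p_y^2+k_n^2$.

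For $A_\tau(k_n)$ the fiber is $(1+k_n+t(p_y))(-\Delta_x+\mu^2)^{-1}\sqrt{(v_\tau)_-}$. Using the explicit Yukawa kernel of $(-\Delta_x+\mu^2)^{-1}$ together with $v_\tau\in L^1(\mathbb{R}^3)$, a direct Hilbert--Schmidt computation gives $\|(-\Delta_x+\mu^2)^{-1}\sqrt{(v_\tau)_-}\|\leq C\mu^{-1/2}$. I would then verify that $(1+k_n+t(p_y))\mu^{-1/2}$ is uniformly bounded: for $|p_y|>1$ this is immediate since $t(p_y)=0$ and $\mu\geq 1$; for $|p_y|\leq 1$ the numerator equals $k_n+\sqrt{|p_y|}$, and a case split on $k_n\leq\sqrt{|p_y|}$ versus $k_n>\sqrt{|p_y|}$ yields $(k_n+\sqrt{|p_y|})^2\leq 4\sqrt{p_y^2+k_n^2}$. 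This is precisely the low--momentum cancellation that the construction of $B_\tau$ was designed to provide.

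For $R_\tau(k_n)$ the fiber is $\{1-\lambda_n K_\tau(p_y,k_n)\}^{-1}$ where the two--body Birman--Schwinger operator is $K_\tau(p_y,k_n):=\sqrt{(v_\tau)_-}[-\Delta_x+\lambda_n(v_\tau)_++\mu^2]^{-1}\sqrt{(v_\tau)_-}$. By operator monotonicity in the additive constant $\mu^2$ one has $K_\tau(p_y,k_n)\leq K_\tau(0,0)$ as positive self--adjoint operators, and in particular $\|K_\tau(p_y,k_n)\|\leq\|K_\tau(0,0)\|$ uniformly in $p_y$. At $\lambda=\lambda_{cr}$, $\lambda_{cr}K_\tau(0,0)$ is the standard zero--energy Birman--Schwinger operator for the pair Hamiltonian $-\Delta+\lambda_{cr}v_\tau$; the hypotheses that pair $\tau$ has no bound state with $E\leq 0$ and is not at critical coupling translate, via the two--body equivalence of critical coupling with a zero--energy resonance, into the strict bound $\|\lambda_{cr}K_\tau(0,0)\|<1$. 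Continuity in $\lambda$ then yields $\|\lambda_n K_\tau(0,0)\|\leq 1-\delta$ for some $\delta>0$ and all large $n$, and a Neumann series expansion on each fiber closes the estimate $\|R_\tau(k_n)\|\leq\delta^{-1}$.

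The main obstacle will be this last strict spectral--radius bound: extracting $\|\lambda_{cr}K_\tau(0,0)\|<1$ (rather than $\leq 1$) requires using both halves of the pair hypothesis in a two--body argument, namely that absence of a zero--energy resonance excludes spectral radius exactly $1$, while absence of negative bound states (by a Birman--Schwinger counting argument) excludes it exceeding $1$. Once this strict gap is in hand, the Fourier reduction, the Yukawa estimate and the monotonicity--plus--continuity step are routine.
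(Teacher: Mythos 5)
Your proposal is correct in substance, but for the $R_\tau$ part it takes a genuinely different route from the paper. The paper never fibers over $p_y$: it works directly in $L^2(\mathbb{R}^{3N-3})$, converts the hypothesis that the pair is not at critical coupling (Definition~2) into the quantitative form inequality $H_0+\lambda_n v_\tau\geq\omega' H_0$, combines it with the standard bound $H_0\geq\gamma_0\lambda_n(v_\tau)_-$ to obtain $\mathcal{A}\geq(1+\omega)\mathcal{B}$ for $\mathcal{A}=H_0+\lambda_n(v_\tau)_++k_n^2$, $\mathcal{B}=\lambda_n(v_\tau)_-$, and then uses the identity $\|\mathcal{A}^{-1/2}\mathcal{B}\mathcal{A}^{-1/2}\|=\|\mathcal{B}^{1/2}\mathcal{A}^{-1}\mathcal{B}^{1/2}\|$ to get $\|\lambda_n\sqrt{(v_\tau)_-}[H_0+\lambda_n(v_\tau)_++k_n^2]^{-1}\sqrt{(v_\tau)_-}\|\leq 1/(1+\omega)$ uniformly --- no fibering, no compactness, no zero--energy limit, no resonance characterization. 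Your route (direct integral over $p_y$, monotonicity in $\mu^2=p_y^2+k_n^2$, reduction to the zero--energy two--body Birman--Schwinger operator, strict bound $<1$, continuity in $\lambda$) reaches the same strict gap but needs extra machinery exactly where you flagged the obstacle: to exclude $\|\lambda_{cr}K_\tau(0,0)\|=1$ you must first show $K_\tau(0,0)$ is compact (Hilbert--Schmidt, using $v_\tau\in L^1\cap L^2$ and the pointwise domination of the kernel of $(-\Delta_x+\lambda(v_\tau)_+)^{-1}$ by the free Green function) so that norm $1$ would be attained by an eigenfunction, and you must then invoke the two--body equivalence between eigenvalue $1$ at zero energy, a zero--energy resonance or bound state, and critical coupling --- an equivalence the paper asserts for $N=2$ but does not prove. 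The trade--off is that your version makes the two--body content explicit and settles uniformity in $n$ cleanly by norm continuity of $\lambda\mapsto\lambda\sqrt{(v_\tau)_-}(-\Delta_x+\lambda(v_\tau)_++\mu^2)^{-1}\sqrt{(v_\tau)_-}$, whereas the paper's form--theoretic argument is shorter but simply posits that $\omega'$ can be chosen independent of $n$. For $A_\tau$ the paper only cites Lemma~6 of \cite{1}; your Yukawa/Hilbert--Schmidt computation with the low--momentum cancellation $(k_n+\sqrt{|p_y|})^2\leq 4\sqrt{p_y^2+k_n^2}$ (valid once $k_n\leq 1$, which holds for large $n$) is precisely the type of kernel estimate used there and again in Lemma~\ref{lem:2} of this paper, so that half matches the intended argument.
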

\begin{proof}
 Without loosing generality we can consider the pair $\tau = (1,2)$, where
$\tau_1 = 1$ and $\tau_2 = 2$. The proof that $\|A_{12} (k_n) \|$ is uniformly
bounded follows the same pattern as the proof of Lemma~6 in \cite{1} and we
omit it here. The proof for $R_\tau (k_n)$ uses the Birman--Schwinger principle in the form suggested in \cite{ges-simon}.
Note that for self--adjoint operators $\mathcal{A}, \mathcal{B} \geq 0$, where $\mathcal{A}^{-1}$ and $\mathcal{A}^{-1/2} \mathcal{B}^{1/2}$ are bounded,
one has 
\begin{equation}\label{stead}
\|\mathcal{A}^{-1/2} \mathcal{B}  \mathcal{A}^{-1/2}\| = \| \mathcal{B}^{1/2} \mathcal{A}^{-1} \mathcal{B}^{1/2} \| , 
\end{equation}
which follows from $\| \mathcal{C}^\dagger \mathcal{C} \| = \|\mathcal{C} \mathcal{C}^\dagger  \|$ for any bounded
 $\mathcal{C}$, see f.~e. \cite{traceideals}.
Due to conditions of Theorem~\ref{th:1} there exists $1 > \omega' > 0$ (independent of $n$)
such that $ (1-\omega') H_0 + \lambda_n (v_\tau) \geq 0 $, or, equivalently 
\begin{equation}\label{taridi}
H_0 + \lambda_n (v_\tau) \geq \omega' H_0 . 
\end{equation}
By standard estimates there must exist $\gamma_0 > 0$ such that $H_0 - \gamma_0 \lambda_n (v_\tau)_- \geq 0$ for all $n$. 
Together with (\ref{taridi}) this means that there exists $\omega > 0$ independent of $n$ such that 
\begin{equation}
H_0 + \lambda_n (v_\tau) - \lambda_n
\omega (v_\tau)_- \geq 0  . 
\end{equation}
To use the identity (\ref{stead}) let us set
\begin{eqnarray}
 \mathcal{A}:= H_0 + \lambda_n (v_\tau)_+ + k_n^2  ,  \\
\mathcal{B}:= \lambda_n (v_\tau)_-  . 
\end{eqnarray}
Because $\mathcal{A} - (1+\omega )\mathcal{B} \geq 0$ for any $\phi \in D(H_0)$ we have
\begin{eqnarray}
(\phi, [\mathcal{A} - (1+\omega )\mathcal{B}]\phi) = (\tilde \phi, \bigl\{1 - (1+\omega
)\mathcal{A}^{-1/2}\mathcal{B}\mathcal{A}^{-1/2}\bigr\} \tilde \phi)\geq 0  ,
\end{eqnarray}
where $\tilde \phi := \mathcal{A}^{1/2}\phi$. For $\phi \in D(H_0)$ the functions $\tilde \phi$ span a dense set in $L^2 (\mathbb{R}^{3N-3})$
since $D(H_0)$ is dense and $\mathcal{A}^{-1/2}$ is bounded. Hence, $(1+\omega )\mathcal{A}^{-1/2}\mathcal{B}\mathcal{A}^{-1/2} \leq 1$ and
$\|\mathcal{A}^{-1/2}\mathcal{B}\mathcal{A}^{-1/2}\| \leq 1/(1+\omega )$. By identity (\ref{stead}) we obtain
\begin{equation}
\fl  \|\lambda_n \sqrt{(v_\tau)_-} \Bigl[ H_0 + \lambda_n(v_\tau)_+ + k_n^2
\Bigr]^{-1} \sqrt{(v_\tau)_-}\| = \| \mathcal{B}^{1/2} \mathcal{A}^{-1} \mathcal{B}^{1/2} \| \leq
1/(1+\omega),
\end{equation}
which means that $R_\tau (k_n) $ in (\ref{e9c}) is correctly defined and uniformly norm--bounded.
\end{proof}

\begin{lemma}\label{lem:2}
 The operators $D_{\tau ;\delta} (k_n) $ given by (\ref{e9c}) are uniformly
norm--bounded.
\end{lemma}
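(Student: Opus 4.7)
The plan is to reduce Lemma~\ref{lem:2} to the three-body estimate from \cite{1} via a coordinate rotation followed by a direct-integral decomposition.

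First I would exploit the invariance (\ref{invarianz}) of $B_\tau$ under block-orthogonal transformations of the Jacobi $y$-variables. Writing $r_{\delta_1}-r_{\delta_2} = c_0\alpha x + \sum_{i=1}^{N-2} c_i y_i$ with scalar coefficients $c_0, c_i$, I choose an orthogonal $(N-2)\times(N-2)$ matrix sending $(c_1,\ldots,c_{N-2})$ to $(|c|,0,\ldots,0)$, and obtain new Jacobi coordinates $\tilde y = (\tilde y_1,\ldots,\tilde y_{N-2})$ in which $v_\delta$ depends only on $(x,\tilde y_1)\in\mathbb{R}^6$, while $(v_\tau)_-$ continues to depend only on $x$.

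Next I take the partial Fourier transform in the spectator variables $\tilde y_\perp := (\tilde y_2,\ldots,\tilde y_{N-2})$, with conjugate momenta $P \in \mathbb{R}^{3(N-3)}$. Since neither multiplication operator depends on $\tilde y_\perp$, and both $(H_0+k_n^2)^{-1}$ and $B_\tau^{-1}(k_n)$ become multipliers in $P$, the operator $D_{\tau;\delta}(k_n)$ decomposes as a direct integral over $P$ whose fibres
\begin{equation*}
D_P(k_n) = \sqrt{(v_\tau)_-(x)}\bigl[-\Delta_x-\Delta_{\tilde y_1}+|P|^2+k_n^2\bigr]^{-1}\bigl[1+k_n^2+t(\tilde p)\bigr]^{-1}\sqrt{|v_\delta(x,\tilde y_1)|},
\end{equation*}
with $|\tilde p|^2 = |p_{\tilde y_1}|^2 + |P|^2$, act on $L^2(\mathbb{R}^6_{x,\tilde y_1})$. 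Consequently $\|D_{\tau;\delta}(k_n)\| = \sup_P \|D_P(k_n)\|$, and it suffices to bound this supremum uniformly in $k_n$. Each fibre $D_P(k_n)$ has exactly the structure of the three-body operator treated in \cite{1}, with the only modifications being that the effective Yukawa mass $\sqrt{|p_{\tilde y_1}|^2+k_n^2}$ is augmented to $\sqrt{|p_{\tilde y_1}|^2+|P|^2+k_n^2}$, and the low-momentum cutoff of $t$ is restricted to $|p_{\tilde y_1}|^2 \leq \max(0, 1-|P|^2)$. Both effects strictly improve the estimate as $|P|$ grows: the Yukawa kernel decays faster, and the regularizing low-momentum region shrinks. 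Hence the analogous three-body bound from \cite{1}, combined with the hypotheses $v_\tau, v_\delta \in L^1(\mathbb{R}^3)\cap L^2(\mathbb{R}^3)$ and the absence of two-body zero-energy resonances for the pair $\tau$, applies uniformly in $P$ and $k_n$.

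The main obstacle, inherited from the three-body case, is the infrared behaviour of the fibre at $P=0$ as $k_n \to 0$: without the $B_\tau^{-1}$ factor a naive Schur-type estimate would blow up, since $(H_0+k_n^2)^{-1}$ loses uniform boundedness between the weighted spaces defined by $\sqrt{(v_\tau)_-}$ and $\sqrt{|v_\delta|}$. The operator $B_\tau$ is engineered so that $B_\tau^{-1}(k_n)$ contributes precisely an integrable $|\tilde p|^{-1/2}$ singularity in the low-momentum region (integrable because radially $r^2 \cdot r^{-1/2}$ is integrable at the origin in $\mathbb{R}^3_{p_{\tilde y_1}}$), which compensates the divergence produced by the Yukawa kernel as $\tilde p \to 0$. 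Executing this balance uniformly in $k_n$ inside the six-dimensional fibre, while respecting that $\sqrt{|v_\delta|}$ does not commute with the $p_{\tilde y_1}$ Fourier variable, is the technical heart of the argument.
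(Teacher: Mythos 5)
Your skeleton --- rotate the Jacobi $y$--coordinates so that $v_\delta$ depends only on $(x,\tilde y_1)$ while $(v_\tau)_-$ depends only on $x$, use the invariance (\ref{invarianz}), fibre over the spectator momenta, and finish with the three--body kernel computation of \cite{1} --- is essentially the paper's own strategy (the paper does the rotation in representative cases (a), (b) plus permutations, and the fibering appears as the $\sup_{p_{y_r}}$ inside the kernel bound). The genuine gap is that you attempt to control the \emph{whole} operator $D_{\tau;\delta}(k_n)$ by a single fibrewise kernel estimate, and that estimate fails in the high--momentum region. For $|p_y|>1$ the multiplier $B_\tau^{-1}(k_n)$ is just the constant $(1+k_n)^{-1}$ (note also $B_\tau(k_n)=1+k_n+t(p_y)$, not $1+k_n^2+t$), so the fibre kernel of $\sqrt{(v_\tau)_-}\,\bigl[H_0+k_n^2\bigr]^{-1}B_\tau^{-1}(k_n)\sqrt{|v_\delta|}$ has a squared modulus whose $p_{\tilde y_1}$--marginal decays only like $|p_{\tilde y_1}|^{-1}$, and the $d^3p_{\tilde y_1}$ integral diverges: the fibre is not Hilbert--Schmidt, and your ``integrable singularity'' argument only addresses the region $|p_y|\leq 1$. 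This is exactly why the paper first splits $D_{\tau;\delta}=D^{(1)}+D^{(2)}$ as in (\ref{secop}): $D^{(1)}$ carries $B_\tau^{-1}(k_n)-(1+k_n)^{-1}$, which vanishes for $|p_y|>1$, so its fibre kernel has compact momentum support and the bound (\ref{wegot}) with integrand $\bigl[(k_n+\sqrt{|p_y|})^{-1}-(k_n+1)^{-1}\bigr]^2(p_y^2+k_n^2)^{-1/2}\leq |p_{y_1}|^{-2}$ closes; the remainder $D^{(2)}$ is treated by a completely different, non--kernel argument, namely the operator--norm factorization $\|\sqrt{a}\,R\,\sqrt{b}\|\leq\|\sqrt{a}\,R\,\sqrt{a}\|^{1/2}\,\|\sqrt{b}\,R\,\sqrt{b}\|^{1/2}$, each factor being uniformly bounded after a partial Fourier transform. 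Without this splitting (or some substitute for large $|p_y|$) your plan cannot be completed; and note that you cannot simply quote the three--body bound of \cite{1} fibrewise, because in your fibre the cutoff enters through $t\bigl(\sqrt{|p_{\tilde y_1}|^2+|P|^2}\bigr)$ rather than $t(|p_{\tilde y_1}|)$, so the kernel bound has to be redone with the supremum over $|P|$ --- which is precisely what (\ref{wegot}) does.

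A second, conceptual point: you describe $B_\tau^{-1}(k_n)$ as a regularizer that ``compensates'' the divergence of the Yukawa kernel. It is the opposite: on $|p_y|\leq 1$ it equals $(k_n+\sqrt{|p_y|})^{-1}$, which blows up as $p_y,k_n\to 0$ (indeed $\|B_\tau^{-1}(k_n)\|=k_n^{-1}$). The factor $B_\tau$ is inserted so that $A_\tau(k_n)$ in (\ref{e9a}) stays uniformly bounded, and the content of Lemma~\ref{lem:2} is that the resulting extra low--momentum singularity in $D_{\tau;\delta}$ is harmless because the combined singularity of the squared kernel, of order $|p_y|^{-2}$, is locally integrable in three dimensions; your counting ($r^2\cdot r^{-1/2}$) does not reflect this balance. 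These are fixable, but as written the central estimate is both mislocated and left unexecuted.
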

\begin{proof}
 Again it suffices to consider the pair $\tau = (1,2)$, where $\tau_1 = 1$ and
$\tau_2 = 2$. We split $D_{12 ;\delta}  (k_n)$ as follows
\begin{eqnarray}
D_{12 ;\delta}  (k_n) =  D^{(1)}_{12;\delta}  (k_n) + D^{(2)}_{12 ;\delta}
(k_n) , \\
 D^{(1)}_{12;\delta}  (k_n) := \sqrt{(v_{12})_- }\Bigl[ H_0  +k_n^2 \Bigr]^{-1}
\left\{ B^{-1}_{12} (k_n) - \frac 1{k_n +1} \right\} \sqrt{|v_\delta|}   , \\
D^{(2)}_{12 ;\delta}  (k_n) := (k_n +1)^{-1} \sqrt{(v_{12})_- }\Bigl[ H_0  +k_n^2 \Bigr]^{-1}
\sqrt{|v_\delta|}   . \label{secop}
\end{eqnarray}
For the operator in (\ref{secop}) we get (see Eqs.~(43)--(44) in \cite{1})
\begin{equation}
\fl  \Bigl\| D^{(2)}_{12 ;\delta}  (k_n)\Bigr\| \leq  \Bigl\| \sqrt{(v_{12})_- }\Bigl[ H_0  +k_n^2 \Bigr]^{-1} \sqrt{(v_{12})_- }\Bigr\|^{1/2} \; 
 \Bigl\| \sqrt{|v_\delta|}  \Bigl[ H_0  +k_n^2 \Bigr]^{-1} \sqrt{|v_\delta|}  \Bigr\|^{1/2} , 
\end{equation}
where both norms in the product are uniformly bounded (this can be easily shown after making an appropriate Fourier transform). It remains
to prove that $ D^{(1)}_{12;\delta}   (k_n)$ is uniformly norm--bounded.
Let us first consider two cases: (a) $\delta_1 = 2$, $\delta_2 = 3$ and
(b) $\delta_1 = 3$, $\delta_2 = 4$. The proof for the case (a) almost repeats
the one in Lemma~9 in \cite{1}. Indeed, we need to show that $\|\mathcal{K}_n\|$ is uniformly bounded, where
\begin{equation}
\mathcal{K}_n  =   \mathcal{F}_{12}  D^{(1)}_{12;23}  (k_n)\mathcal{F}^{-1}_{12} .
\end{equation}
For convenience we denote $p_{y_r} := (p_{y_2} , p_{y_3}, \ldots, p_{y_{N-2}}) \in \mathbb{R}^{3N-9}$.
The integral operator $\mathcal{K}_n$ acts on $\phi(x,  p_{y_1},  p_{y_r} ) \in L^2(\mathbb{R}^{3N-3})$ as follows
\begin{equation}
\fl \mathcal{K}_n  \phi (x,  p_{y_1}, p_{y_r} ) = \int d^3 x' \: d^3
p'_{y_1} \: K_n (x, x' ,   p_{y_1},  p'_{y_1} ;  p_{y_r} )
\phi(x', p'_{y_1}, p_{y_r} )   ,
\end{equation}
where the integral kernel has the form \cite{1}
\begin{eqnarray}
\fl K_n (x,x', p_{y_1} , p'_{y_1}; p_{y_r}) = \frac 1{2^{7/2}\pi^{5/2} \gamma^3} \left[\frac 1{k_n + 1 +
t(p_y)} - \frac 1{k_n + 1} \right] \Bigl| \bigl( V_{12}\bigr)_-  (\alpha x)\Bigr|^{1/2} \nonumber \\
\times \frac{e^{-\sqrt{p_y^2 + k_n^2} |x-x'|}}{|x-x'|}
\exp{\left\{i\frac{\beta}{\gamma} x' \cdot (p_{y_1} - p'_{y_1})\right\}}\:
\widehat{\bigl| V_{23}\bigr|^{1/2}} ((p_{y_1} - p'_{y_1})/\gamma) ,  \label{four27}
\end{eqnarray}
$\beta := -m_2 \hbar / ((m_1 + m_2)\sqrt{2\mu_{12}})$
and $\gamma := \hbar/\sqrt{2M_{12}}$. Using the estimate 
\begin{equation}
  \| \mathcal{K}_n \|^2 \leq \sup_{ p_{y_r}} \int d^3 x \:  d^3 x' \: d^3
p_{y_1} \: d^3  p'_{y_1} \: \bigl|K_n (x, x' ,   p_{y_1},
p'_{y_1} ;  p_{y_r} )\bigr|^2
\end{equation}
we get 
\begin{equation}\label{wegot}
  \| \mathcal{K}_n \|^2 \leq C_0 \sup_{ |p_{y_r}| \leq \sqrt{1-p_{y_1}^2}} \int_{|p_{y_1}| \leq 1} d^3 p_{y_1} \left[\frac 1{k_n + \sqrt{|p_y|}}  - \frac 1{k_n +1} \right]^2 \frac 1{\sqrt{p_y^2 +k_n^2}} , 
\end{equation}
where the constant 
\begin{equation}
  \fl C_0 := \frac 1{2^7 \pi^5 \gamma^6} \left(\int d^3 x \; \bigl( V_{12}\bigr)_-  (\alpha x)\right) \left(\int d^3 s \;  \Bigl|\widehat{\bigl| V_{23}\bigr|^{1/2} } (s/ \gamma) \Bigr|^2 \right) 
\left(\int d^3 t \; |t|^{-2} e^{-2|t|}\right) 
\end{equation}
is finite. Continuing (\ref{wegot}) 
\begin{equation}
  \| \mathcal{K}_n \|^2 \leq C_0  \int_{|p_{y_1}| \leq 1} d^3 p_{y_1} \frac 1{|p_{y_1}|^2} = 4\pi C_0  . 
\end{equation}
In the case (b) we make the orthogonal transformation of Jacobi coordinates,
where $\tilde y_1  = \alpha'^{-1} (r_4 - r_3)$ and $\alpha' := \hbar
/\sqrt{2\mu_{34}} $.
Other transformed coordinates we denote as $\tilde y_r := (\tilde y_2, \ldots ,
\tilde y_{N-2}) \in \mathbb{R}^{3N-9}$.  This choice of coordinates for $N=4$ is
illustrated
in Fig.~1 (Right). We need to prove that $\|\mathcal{L}_n \|$ is uniformly bounded, where
\begin{equation}
\mathcal{L}_n =   \tilde \mathcal{ F}_{12} \sqrt{(v_{12})_- }\Bigl[ H_0  +k_n^2
\Bigr]^{-1}  \Bigl\{ B^{-1}_{12} (k_n) - 1 \Bigr\} \sqrt{|v_{34}|}
\tilde \mathcal{ F}^{-1}_{12}
\end{equation}
and $\tilde \mathcal{F}_{12}$ is defined as in (\ref{four2}). The operator $\mathcal{L}_n $ acts on
$\phi(x, \tilde p_{y_1}, \tilde p_{y_r} ) \in L^2(\mathbb{R}^{3N-3})$ as
\begin{equation}
\fl \mathcal{L}_n  \phi (x, \tilde p_{y_1}, \tilde p_{y_r} ) = \int d^3 x' \: d^3
p'_{y_1} \: L_n (x, x' ,  \tilde p_{y_1}, \tilde p'_{y_1} ; \tilde p_{y_r} )
\phi(x', \tilde p'_{y_1}, \tilde p_{y_r} )   ,
\end{equation}
where the integral kernel is
\begin{eqnarray}
\fl L_n (x, x' ,  \tilde p_{y_1}, \tilde p'_{y_1}	 ; \tilde p_{y_r} ) = \frac
1{2^{7/2}\pi^{5/2} (\alpha')^3} \left\{\frac 1{k_n + 1 + t(\tilde p_y)} - \frac
1{k_n + 1} \right\}
\Bigl| \bigl( V_{12}\bigr)_-  (\alpha x)\Bigr|^{1/2}  \nonumber \\
\times  \frac{e^{-\sqrt{\tilde p_y^2 + k_n^2} |x-x'|}}{|x-x'|}
\widehat{\bigl| V_{34}\bigr|^{1/2} } ((\tilde p_{y_1} - \tilde p'_{y_1})/\alpha') .
\end{eqnarray}
Now the proof that $ \|\mathcal{L}_n \|$ is uniformly bounded is identical to the one in the case (a) and so we omit it. The general case of $D^{(2)}_{12 ;\delta} $ follows from (a) and (b)
by making an orthogonal coordinate transformation, which corresponds to
the appropriate permutation of the particle numbers.
\end{proof}

\ack 

The author would like to thank Prof. Walter Greiner for the warm hospitality at FIAS.

\appendix
\section{The No-Clustering Theorem}\label{sec:3}

Below we prove the statement, which we call the no--clustering theorem. In the following $\chi_L : \mathbb{R}^3 \to \mathbb{R}$ denotes the function such that 
$\chi_L (r) = 1$ if $0 \leq |r| \leq L$ and zero otherwise.

We shall make use of the following Lemma concerning minimizing sequences
\cite{zhizhenkova}.
\begin{lemma}[(Zhislin)]\label{lem:6}
Suppose that $H \geq 0$ is given by (\ref{hami}), where $\lambda = 1$ and $V_{ij} \in L^2 +
L^\infty_\infty$. Suppose additionally that there is a normalized minimizing sequence $f_n \in D(H_0)$ 
such that $(f_n , H f_n) \to 0$. If $f_n$ does not totally spread then 
there exists a normalized $\phi_0 \in D(H_0)$ such that $H\phi_0 = 0$.
\end{lemma}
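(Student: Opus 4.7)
The plan is to extract a weak limit $\phi_0$ of the minimizing sequence $\{f_n\}$, use the non-spreading hypothesis to guarantee $\phi_0\neq 0$, and conclude via weak lower semicontinuity of the quadratic form together with $H\geq 0$.

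First I would bound $\{f_n\}$ in the form domain of $H_0$. Since each $V_{ij}\in L^2(\mathbb{R}^3)+L^\infty_\infty(\mathbb{R}^3)$, a standard argument in Jacobi coordinates (using Sobolev on the $x$-variable conjugate to $r_i-r_j$) shows that $V_{ij}(r_i-r_j)$ is $H_0$-form bounded with relative bound zero. Writing $H_0 = H-\sum_{i<j}V_{ij}$ and applying the form bound,
\[
(f_n,H_0 f_n)\leq \tfrac{1}{2}(f_n,H_0 f_n)+C[(f_n,Hf_n)+\|f_n\|^2],
\]
so $(f_n,H_0 f_n)$ is uniformly bounded. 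Hence $\{f_n\}$ is bounded in the Sobolev space $H^1(\mathbb{R}^{3N-3})$, and after passing to a subsequence $f_n\rightharpoonup\phi_0$ weakly in $H^1$ (and in particular in $L^2$).

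Next I would use the non-spreading hypothesis to exclude $\phi_0\equiv 0$. By the definition of spreading from \cite{1}, $f_n$ totally spreads when the mass localized in every bounded $\xi$-region (suitably interpreted) tends to zero; the negation provides, along a further subsequence, constants $R>0$ and $c>0$ with $\int_{|\xi|\leq R}|f_n|^2\,d\xi\geq c$. The Rellich--Kondrachov theorem applied to the $H^1$-bounded subsequence yields strong $L^2$-convergence on $\{|\xi|\leq R\}$, so $\|\chi_L\,\phi_0\|^2\geq c$ for $L=R$, hence $\phi_0\not\equiv 0$. Then, since the quadratic form of the nonnegative self-adjoint operator $H$ is weakly lower semicontinuous on its form domain,
\[
0\leq (\phi_0,H\phi_0)\leq \liminf_{n\to\infty}(f_n,Hf_n)=0.
\]
Therefore $H^{1/2}\phi_0=0$, which gives $\phi_0\in D(H)$ with $H\phi_0=0$. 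Because the $V_{ij}$ are $H_0$-operator bounded with relative bound zero, $D(H)=D(H_0)$, so $\phi_0\in D(H_0)$, and $\phi_0/\|\phi_0\|$ is the required unit vector.

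The main obstacle is the second step: securing $\phi_0\neq 0$ requires that the precise meaning of "totally spreads" in \cite{1} rules out mass escaping to infinity in $\mathbb{R}^{3N-3}$. If the notion of spreading incorporates cluster decompositions, its negation may only deliver localization in certain intracluster coordinates while intercluster Jacobi coordinates escape, so that $f_n\rightharpoonup 0$ in $L^2(\mathbb{R}^{3N-3})$ despite non-spreading. In that case one must first pass to Jacobi coordinates adapted to the surviving cluster decomposition and, exploiting that the interactions $V_{ij}(r_i-r_j)$ depend only on pair differences, translate $f_n$ along the escaping intercluster directions so that the localized mass sits near the origin before applying Rellich. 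The rest of the argument then proceeds exactly as above.
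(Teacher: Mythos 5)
Your proposal is correct, and its overall skeleton (extract a weak limit, use the failure of total spreading plus compactness to show the limit is nonzero, then use $H\geq 0$ and the vanishing of the form to conclude $H\phi_0=0$) coincides with the paper's. The execution differs in two places. For the eigenvalue equation, the paper avoids any a priori Sobolev bound: it simply computes $(Hg,\phi)=\lim_k (H^{1/2}g,H^{1/2}f_{n_k})=0$ for all $g\in D(H_0)$, using $\|H^{1/2}f_{n_k}\|^2=(f_{n_k},Hf_{n_k})\to 0$, and reads off $\phi\in D(H^\ast)=D(H_0)$ with $H\phi=0$; you instead first prove the uniform form bound $(f_n,H_0f_n)\leq C$ via relative form-boundedness of the $V_{ij}$ and then invoke weak lower semicontinuity of the closed form $\|H^{1/2}\cdot\|^2$. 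Both are sound; your route costs the extra (easy) form bound but is self-contained, while the paper's duality argument is shorter. For nontriviality of the limit, the paper cites Lemma~3 of \cite{1}, which is essentially your Rellich--Kondrachov step, so there you have merely made explicit what the paper outsources (and your $H^1$ bound is exactly the input that compactness argument needs). Finally, the cautionary last paragraph is unnecessary: ``totally spreads'' in \cite{1} means precisely that $\|\chi_{\{|\xi|\leq R\}}f_n\|\to 0$ for every $R>0$ in the full relative coordinate $\xi\in\mathbb{R}^{3N-3}$, and its negation yields exactly the localization $\|\chi_{\{|\xi|\leq R\}}f_{n_k}\|>a$ along a subsequence that both you and the paper use, so no cluster-adapted translations are needed.
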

\begin{proof}
Since $f_n$ does not totally spread there must exist a subsequence such that $\| \chi_{\{x| |x| \leq R\}} f_{n_k}\| >a$
for some $R >0$ and $a>0$. We can assume that $f_{n_k} \wto \phi \in L^2 (\mathbb{R}^{3N-3})$ otherwise we could pass to the 
weakly converging subsequence, which exists by the Banach--Alaoglu theorem.
Thus for any $g \in D(H_0)$ we have
\begin{equation}\label{e9}
 (Hg, \phi ) = \lim_{k \to \infty}(Hg, f_{n_k}) = \lim_{k \to \infty}(H^{1/2}g,
H^{1/2}f_{n_k}) = 0
\end{equation}
because $\| H^{1/2}f_{n_k}\| = (f_{n_k} , H f_{n_k})  \to 0$ by condition of the lemma. From (\ref{e9}) it follows that $\phi \in D(H_0)$ and $H \phi = 0$. That
$\|\phi\| \neq 0$ follows from Lemma~3 in \cite{1}. Setting $\phi_0 = \phi/\|\phi\|$ we prove the lemma.
\end{proof}

\begin{theorem}\label{lem:5}
Suppose that $H$ is given by (\ref{hami}), where $\lambda = 1$ and $V_{ij} \in L^2 +
L^\infty_\infty$, and none of the subsystems has an eigenstate with an energy
less or equal to zero.
Let $\psi_n \in D(H_0)$ be a totally spreading sequence such that $(\psi_n, H
\psi_n) \to 0$. Then $(\psi_n, F(r_i - r_j ) \psi_n ) \to 0$ for all particle
pairs $(i,j)$ and
any given $F \in L^2 (\mathbb{R}^3) + L^\infty_\infty (\mathbb{R}^3)$.
\end{theorem}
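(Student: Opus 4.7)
I would argue by contradiction: if some pair distance expectation fails to vanish, I would extract a sequence that constitutes a minimizing sequence for a two--body sub--Hamiltonian $H_{ij}$ but fails to totally spread, and then apply Lemma~\ref{lem:6} (Zhislin) to obtain a forbidden zero--energy bound state for that sub--system.

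I would first reduce the claim to showing $(\psi_n, \chi_R(r_i-r_j)\psi_n) \to 0$ for every pair and every $R > 0$. Since $V_{ij} \in L^2 + L^\infty_\infty$ is infinitesimally form--bounded relative to $H_0$, the hypothesis gives $(\psi_n, H_0\psi_n) \leq C$, so $\psi_n$ is uniformly bounded in $H^1(\mathbb{R}^{3N-3})$; then $F\in L^\infty_\infty$ is controlled by its decay at infinity and $F\in L^2$ is handled by truncation plus Sobolev embedding. Next I would prove a sub--system positivity lemma: for each pair $(i,j)$ and each $R>0$ there exists $\epsilon_R > 0$ with $(\phi, H_{ij}\phi) \geq \epsilon_R \|\phi\|^2$ for every $\phi\in L^2(\mathbb{R}^3)$ supported in $\{|r_i-r_j|\leq R\}$; this uses $H_{ij}\geq 0$ from the hypothesis together with Rellich compactness, since a hypothetical minimizing sequence supported in a ball would converge strongly to a nonzero zero--eigenvector, contradicting the no--bound--state assumption. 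Assuming for contradiction $(\psi_n, \chi_R(r_1-r_2)\psi_n)\geq a>0$, a smooth cutoff $\eta(r_1-r_2)$ and the slicewise application of the positivity lemma in the remaining coordinates yield $(\eta\psi_n, H_{12}\eta\psi_n)\geq \epsilon_{2R}\,a$.

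To turn this into a contradiction with $(\psi_n, H\psi_n)\to 0$, I would combine the IMS localization in $r_1-r_2$ with a Deift--Simon--Agmon partition of unity $\{J_\mathcal{A}\}$ indexed by cluster partitions of the particles. On each region $\mathrm{supp}(J_\mathcal{A})$ the Hamiltonian satisfies $H \geq H_\mathcal{A} + T^\mathrm{inter}_\mathcal{A} - \epsilon$, with $H_\mathcal{A}\geq 0$ by the sub--system hypothesis and the inter--cluster potentials small; the ``all--close'' region is killed by total spreading. The principal obstacle is the quantitative balance between the IMS commutator errors, which scale as $R^{-2}$, and the lower bound $\epsilon_R$, which scales generically as $R^{-2}$ as well: since the mass $a$ from the contradiction assumption may be arbitrarily small, cleanly extracting the contradiction appears to need either an induction on the particle number $N$ (applying the already--proved theorem to cluster sub--systems to force $(\psi_n, V_{ij}\psi_n)\to 0$ for inter--cluster pairs, so that $W^\mathrm{inter}_\mathcal{A}\to 0$ rather than merely being small) or a sharper Hardy--type estimate exploiting the hypothesis that no particle pair is at critical coupling.
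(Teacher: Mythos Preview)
Your reduction to characteristic functions and the $H^1$--boundedness are correct and match the paper's Lemma~\ref{lem:7}. The real issue is the one you yourself flag: your two--body positivity lemma gives a lower bound $\epsilon_R$ that is at best of order $R^{-2}$, the IMS localization error is also of order $R^{-2}$, and the other pair potentials on $\mathrm{supp}\,\eta$ are not small (they are not inter--cluster potentials for any cluster decomposition determined by $r_1-r_2$ alone). So the direct route via $(\eta\psi_n, H_{12}\,\eta\psi_n)\geq \epsilon_{2R}a$ does not close. Your fallback (b) is not available here: the absence of pairs at critical coupling is a hypothesis of Theorem~\ref{th:1}, not of the present theorem, whose only assumption is ``no subsystem has an eigenvalue $\leq 0$''.

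What you are missing is exactly the mechanism that makes the induction (your fallback (a)) work. The paper does \emph{not} try to descend to a two--body problem. It uses the Ruelle--Simon partition $\{J_s\}_{s=1}^N$ indexed by single particles; the IMS remainder $K$ is relatively $H_0$--compact, so $(\psi_n,K\psi_n)\to 0$ by total spreading, and $J_1\chi_L(r_1-r_2)$, $J_2\chi_L(r_1-r_2)$ are also relatively compact, forcing the mass to sit on some $J_{s_0}$ with $s_0\geq 3$. At this point $(\psi_n^{(s_0)},H^{(s_0)}\psi_n^{(s_0)})\to 0$ for the $(N-1)$--body Hamiltonian $H^{(s_0)}$. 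The crucial step --- absent from your outline --- is a \emph{slicing argument} in the relative coordinate $y$ separating particle $s_0$ from the rest: one defines Borel sets $\mathcal{M}_k=\{y:(g_k,H^{(s_0)}g_k)_\zeta<\sqrt{\varepsilon_k}(g_k,g_k)_\zeta\}$ and $\mathcal{N}_k=\{y:(g_k,\chi_L g_k)_\zeta\geq (a/2)(g_k,g_k)_\zeta\}$, shows by a simple measure--pigeonhole that $\mathcal{M}_k\cap\mathcal{N}_k\neq\emptyset$, and picks $y_k$ in the intersection. The slices $\phi_k(\zeta)=g_k(y_k,\zeta)/\|g_k(y_k,\cdot)\|$ then form a normalized minimizing sequence for $H^{(s_0)}$ on $L^2(\mathbb{R}^{3N-6})$ with $(\phi_k,\chi_L\phi_k)\geq a/2$. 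Now Lemma~\ref{lem:6} (Zhislin) forces $\phi_k$ to totally spread, and the induction hypothesis for $N-1$ particles gives $(\phi_k,\chi_L\phi_k)\to 0$, a contradiction. Without this dimension--reduction step you have no way to bring the induction hypothesis to bear, which is why your argument stalls at the scaling balance.
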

\begin{proof}
Note that $\| H_0 \psi_n \|$ is uniformly bounded, c.~f. Lemma~1 in \cite{1} and $\psi_n \wto 0$ because $\psi_n$ totally spreads. By Lemma~\ref{lem:7} it is enough 
to prove the statement for $F(r) = \chi_L (r)$ and all $L >0$. 
For $N=2$ the statement becomes trivial. For $N \geq 3$ we prove the theorem by induction assuming that 
it holds for $N-1$ particles. Without loosing generality it is enough to show that $(\psi_n, \chi_L (r_1 - r_2 ) \psi_n ) \to 0$ for all $L>0$. 

We can assume that $\psi_n \in C_0^\infty (\mathbb{R}^{3N-3})$ otherwise we can pass to an appropriate sequence using that $C_0^\infty (\mathbb{R}^{3N-3})$ is dense in 
$D(H_0)$, see \cite{loss}. A proof by contradiction. Lets us assume that 
\begin{equation}\label{quist}
 \limsup_{n \to \infty} (\psi_n,\chi_L (r_1 - r_2 )\psi_n ) = a'
\end{equation}
for some $L >0$ and $a' >0$.

Let $J_s \in C^{2}
(\mathbb{R}^{3N-3})$ denote the Ruelle--Simon partition of unity, see Definition
3.4 and Proposition 3.5 in \cite{cikon}. For $s=1,2,\ldots, N$ one has
$J_s \geq 0$, $\sum_s J^{2}_s =1$ and $J_s (\lambda x) =J_s (x)$ for $\lambda
\geq 1$ and $|x|= 1$. Besides there exists $C > 0$ such that for $i \neq s$
\begin{equation}\label{ims5}
    \supp J_s \cap \{ x | |x| > 1 \} \subset \{x|\; |r_i - r_s | \geq C |x|\} .
\end{equation}
By the IMS formula (Theorem 3.2 in \cite{cikon})
\begin{equation}\label{ims}
    H = \sum_{s=1}^N J_s H_s  J_s + K , 
\end{equation}
where
\begin{eqnarray}
K := \sum_{s}\sum_{l \neq s}   V_{ls}  |J_s |^2 + \sum_s|\nabla J_s |^2  ,  \label{K}\\
H_s  := H - \sum_{l\neq s}V_{ls} . 
\end{eqnarray}
The operator $K$ is relatively $H_0$--compact, see Lemma~7.11 in \cite{teschl}. $H_s$ is the same operator as $H$ except that the pair--interactions 
that involve particle $s$ are switched off. 
By (\ref{quist}) we get 
\begin{equation}
\limsup_{n \to \infty} \sum_s (\psi^{(s)}_n, \chi_L (r_1 - r_2 ) \psi^{(s)}_n ) = a' , 
\end{equation}
where we define $\psi^{(s)}_n := J_s\psi_n \in C_0^\infty (\mathbb{R}^{3N-3})$. The operators $J_1 \chi_L (r_1 - r_2 )$ and $J_2 \chi_L (r_1 - r_2 )$ are relatively $H_0$ 
compact, hence, $\|\chi_L (r_1 - r_2 ) \psi^{(1)}_n \| \to 0$ and $\| \chi_L (r_1 - r_2 ) \psi^{(2)}_n \| \to 0$ by Lemma~2 in \cite{1}. Thus there must exist $s_0 \geq 3$ such that 
\begin{equation}\label{gre1}
\limsup_{n \to \infty}  (\psi^{(s_0)}_n, \chi_L (r_1 - r_2 ) \psi^{(s_0)}_n ) = 2a , 
\end{equation}
where $0 <a \leq 1/2$ is a constant. Let $\zeta \in \mathbb{R}^{3N-6}$ denote the internal Jacobi coordinates for the particles $\{1, 2, \ldots s_0 - 1, s_0+1, \ldots, N\}$ and 
$y \in \mathbb{R}^{3}$ the coordinate, which points from the particle $s_0$ to the center of mass of other particles. We choose the scales so that 
$H_0 = -\Delta_\zeta - \Delta_y$. 
It is convenient to introduce 
\begin{eqnarray}\label{fff}
    H^{(s_0)} := -\Delta_\zeta + V(\zeta) \nonumber  ,  \\
V(\zeta) := \sum_{i < k}V_{ik} - \sum_{l\neq s_0}V_{ls_0}  . 
\end{eqnarray}
Clearly, 
\begin{equation}\label{cmpar}
H_{s_0} \geq H^{(s_0)} \geq 0 . 
\end{equation}
The operator $ H^{(s_0)} $ is the Hamiltonian of the particles $\{1, 2, \ldots s_0 - 1, s_0+1, \ldots, N\}$ and 
can be considered on the domain $D(-\Delta_\zeta) \subset L^2(\mathbb{R}^{3N-6})$ as well. We have $\psi_n \wto 0$ because $\psi_n$ totally spreads. 
Because $K$ in (\ref{K}) is relatively $H_0$ compact we have $K \psi_n \to 0$, see Lemma~2 in \cite{1}. Using $(\psi_n ,  H \psi_n) \to 0$ and $H_s \geq 0$ 
we infer from (\ref{ims}) that $(\psi_n^{(s)}, H_s \psi_n^{(s)}) \to 0$ for all $s$. Hence, by (\ref{cmpar})
\begin{equation}\label{gre2}
(\psi_n^{(s_0)}, H^{(s_0)} \psi_n^{(s_0)}) \to 0  . 
\end{equation}
Looking at (\ref{gre1}) and (\ref{gre2}) we conclude that there exists a subsequence $\psi^{(s_0)}_{n_k}$ such that 
\begin{eqnarray}
(\psi^{(s_0)}_{n_k}, \chi_L (r_1 - r_2 ) \psi^{(s_0)}_{n_k}) \geq a , \label{gre4}\\
 (\psi^{(s_0)}_{n_k}, H^{(s_0)} \psi^{(s_0)}_{n_k}) \to 0 . \label{gre5}
\end{eqnarray}
From (\ref{gre4}) it follows that $\sqrt a \leq \| \psi^{(s_0)}_{n_k} \| \leq 1$. Thus defining $g_k := \psi^{(s_0)}_{n_k} / \| \psi^{(s_0)}_{n_k} \| $ we obtain 
\begin{eqnarray}
(g_k , \chi_L (r_1 - r_2 )g_k) \geq a  \label{gre6}\\
 \varepsilon_k := (g_k , H^{(s_0)} g_k ) \to 0 \label{gre7} , 
\end{eqnarray}
where $g_k \in C_0^\infty (\mathbb{R}^{3N-3})$ and $\|g_k \| = 1$. 
For $f(\zeta, y), h(\zeta, y) \in L^2(\mathbb{R}^{3N-3})$ let us introduce the notation 
\begin{equation}
 (f, h)_\zeta := \int d^{3N-6} \zeta \; f^* (\zeta, y) h(\zeta, y) , 
\end{equation}
where $(f, h)_\zeta$ depends on $y \in \mathbb{R}^3$. 

Now we define the following subsets of $\mathbb{R}^3$
\begin{eqnarray}
 \fl \mathcal{M}_k := \left\{ y \Bigl| (g_k , g_k)_\zeta  > 0 \right\} 
\cap \left\{ y \Bigl| (g_k, H^{(s_0)}
g_k )_\zeta < \sqrt{\varepsilon_k} (g_k, g_k)_\zeta \right\}  ,  \label{schra0}\\
\fl \mathcal{N}_k :=  \left\{ y \Bigl| (g_k, \chi_L (r_1 -r_2) g_k)_\zeta \geq (a/2) (g_k, g_k)_\zeta \right\}  . \label{schra1}
\end{eqnarray}
By standard results
$\mathcal{M}_k, \mathcal{N}_k$ are Borel sets. Below we prove that there exists
$k_0 $
such that $\mathcal{N}_k \cap \mathcal{M}_k \neq \emptyset$ for $k \geq k_0$.
For any Borel set $X \subset \mathbb{R}^3$ we define
\begin{equation}
 \mu_k (X) := \int _X d^3y \; (g_k, g_k)_\zeta  . 
\end{equation}
Because $g_k$ is normalized we have $\mu_k (\mathbb{R}^3) = 1$. On one
hand, using (\ref{gre7}) and (\ref{schra0})
\begin{equation}
 \fl \mu_k  (\mathbb{R}^3 / \mathcal{M}_k) = \int_{\mathbb{R}^3 / \mathcal{M}_k} d^3y \; 
(g_k , g_k )_\zeta 
\leq \frac 1{\sqrt{\varepsilon_k}} \int_{\mathbb{R}^3 / \mathcal{M}_k} d^3y \; (g_k, H^{(s_0)}g_k )_\zeta \leq \sqrt{\varepsilon_k}  . 
\end{equation}
Hence,
\begin{equation}
 \mu_k  (\mathcal{M}_k) \geq 1 - \sqrt{\varepsilon_k} . \label{schra3}
\end{equation}
On the other hand, using that according to (\ref{gre6}) $\int d^3y \; (g_k, \chi_L (r_1 - r_2) g_k)_\zeta \geq a$ we get
\begin{eqnarray}
 \fl \mu_k (\mathcal{N}_k) \geq \int_{\mathcal{N}_k} d^3y (g_k,
\chi_L(r_1-r_2)g_k)_\zeta \geq a - \int_{\mathbb{R}^3/
\mathcal{N}_k} (g_k,\chi_L(r_1-r_2)g_k)_\zeta  \nonumber\\
\geq a - \frac a2 \mu_k (\mathbb{R}^3/ \mathcal{N}_k) \geq  \frac a2  , 
\label{schra4}
\end{eqnarray}
where we applied (\ref{schra1}) and $\mu_k (\mathbb{R}^m/ \mathcal{N}_k) \leq
1$. Now it is clear that that there exists $k_0 $
such that $\mathcal{N}_k \cap \mathcal{M}_k \neq \emptyset$ for $k \geq k_0$.
Otherwise, according to (\ref{schra3}) and (\ref{schra4}) we
would have
\begin{equation}
 1 = \mu_k (\mathbb{R}^m) \geq \mu_k (\mathcal{M}_k) + \mu_k (\mathcal{N}_k)
\geq 1+ \frac a2 - \sqrt{\varepsilon_k},
\end{equation}
which is a contradiction since $\varepsilon_k \to 0$. Now we construct the minimizing
sequence for $H^{(s_0)}$ (considered now on $D(-\Delta_\zeta)$) taking any $y_k \in \mathcal{N}_k
\cap \mathcal{M}_k$ for $k \geq k_0$ and setting 
\begin{equation}
\phi_k (\zeta) := g_k (y_k , \zeta) \left(\int d^{3N-6}\zeta \; |g_k (y_k , \zeta)|\right)^{-1/2} . 
\end{equation}
Due to (\ref{schra0})--(\ref{schra1}) the sequence $\phi_k (\zeta) \in C_0^\infty (\mathbb{R}^{3N-6})$
has the following properties: $\|\phi_k \| =1$, $(\phi_k , H^{(s_0)} \phi_k) \to 0$ and 
\begin{equation}\label{mishka}
 (\phi_k , \chi_L (r_1 -r_2) \phi_k) \geq a/2. 
\end{equation}
By Lemma~\ref{lem:6} $\phi_k$ must totally spread because $H^{(s_0)} $ is not allowed to have zero energy bound states. Since $H^{(s_0)} $ is the Hamiltonian of $N-1$ particles 
by the induction assumption it follows that 
\begin{equation}
 (\phi_k , \chi_L (r_1 -r_2) \phi_k) \to 0  , 
\end{equation}
which contradicts (\ref{mishka}).
\end{proof}

\begin{lemma}\label{lem:7}
Let $f_n (x)\in D(H_0) \subset L^2 (\mathbb{R}^{3N-3})$ and $\| f_n \| + \|H_0
f_n \| \leq 1$. Suppose that $\| \chi_{\{x| |r_i - r_j| < q\}} f_n \| \to 0$ for
some
fixed $i \neq j$ and any $q > 0$.
Then $(f_n, F(r_i - r_j ) f_n ) \to 0$ for any given $F \in L^2 (\mathbb{R}^3) + L^\infty_\infty (\mathbb{R}^3)$.
\end{lemma}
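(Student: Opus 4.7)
The plan is to decompose $F = F_1 + F_2$ with $F_1 \in L^2(\mathbb{R}^3)$ and $F_2 \in L^\infty_\infty(\mathbb{R}^3)$ and handle the two summands separately. The $L^\infty_\infty$ piece is immediate: given $\epsilon > 0$, I would pick $R > 0$ so that $|F_2(r)| < \epsilon$ whenever $|r| > R$, then split $F_2 = F_2\chi_{\{|r|\leq R\}} + F_2\chi_{\{|r|>R\}}$ to obtain
\begin{equation}
|(f_n, F_2(r_i-r_j) f_n)| \leq \|F_2\|_\infty \|\chi_{\{|r_i-r_j|\leq R\}} f_n\|^2 + \epsilon \|f_n\|^2;
\end{equation}
the first term vanishes by the hypothesis of the lemma (with $q = R$), and since $\epsilon$ is arbitrary the expectation tends to zero.

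For the $L^2$ piece the key step is the uniform estimate
\begin{equation}\label{keyest}
\|g(r_i-r_j)\, f_n\| \leq C\, \|g\|_{L^2(\mathbb{R}^3)}, \qquad g \in L^2(\mathbb{R}^3),
\end{equation}
with $C$ independent of $n$ and $g$. To prove it I would pass to Jacobi coordinates $(y_1,y')$ in which $y_1$ is proportional to $r_i-r_j$ and $H_0 = -\Delta_{y_1} - \Delta_{y'}$. Plancherel on the full momentum variable then yields $\|(-\Delta_{y_1}) f_n\| \leq \|H_0 f_n\| \leq 1$, so $f_n$ is uniformly bounded in $L^2_{y'}(H^2_{y_1})$. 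The three-dimensional Sobolev embedding $H^2(\mathbb{R}^3) \hookrightarrow L^\infty(\mathbb{R}^3)$ then supplies
\begin{equation}
\int dy'\, \|f_n(\cdot, y')\|_{L^\infty_{y_1}}^2 \leq C,
\end{equation}
and slicing in $y_1$ with the pointwise inequality $\int |g(\alpha y_1)|^2 |f_n(y_1,y')|^2\, dy_1 \leq \|g(\alpha\cdot)\|_{L^2_{y_1}}^2 \|f_n(\cdot,y')\|_{L^\infty_{y_1}}^2$ gives (\ref{keyest}) after absorbing the Jacobian factor from rescaling $y_1$ into $C$.

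Armed with (\ref{keyest}) I would finish by approximation. Split $F_1 = F_1\chi_{\{|r|\leq R\}} + F_1\chi_{\{|r|>R\}}$. Inserting $\chi_{\{|r_i-r_j|\leq R\}}^2 = \chi_{\{|r_i-r_j|\leq R\}}$ on the inner piece and applying Cauchy--Schwarz gives
\begin{equation}
|(f_n, F_1 \chi_{\{|r|\leq R\}}(r_i-r_j)\, f_n)| \leq C \|F_1\|_{L^2}\, \|\chi_{\{|r_i-r_j|\leq R\}} f_n\|,
\end{equation}
which tends to $0$ as $n \to \infty$ for each fixed $R$, while for the tail (\ref{keyest}) yields $|(f_n, F_1\chi_{\{|r|>R\}} f_n)| \leq C\|F_1\chi_{\{|r|>R\}}\|_{L^2}$, which vanishes as $R \to \infty$ by dominated convergence. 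Picking $R$ large and then $n$ large concludes the argument. The only genuine obstacle is the fiber-Sobolev bound (\ref{keyest}); once it is in place, the rest of the proof reduces to careful splitting-plus-Cauchy--Schwarz bookkeeping.
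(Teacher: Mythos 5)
Your proof is correct and follows essentially the same route as the paper's: split at a radius $q=R$ in the relative coordinate $r_i-r_j$, kill the inner piece with the hypothesis $\|\chi_{\{|r_i-r_j|<R\}}f_n\|\to 0$ times a uniform relative bound, and make the tail uniformly small in $n$ by letting $R\to\infty$. The only difference is presentational: you prove the key uniform bound $\|g(r_i-r_j)f_n\|\leq C\|g\|_{L^2}$ self-containedly via the fiber Sobolev embedding and treat the $L^\infty_\infty$ part explicitly, where the paper instead invokes relative $H_0$-boundedness and Lemma~5 of \cite{1} for the operator norm $\|\chi_{\{|r_i-r_j|\geq q\}}|F|(H_0+1)^{-1}\|$.
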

\begin{proof}
Obviously, it suffices to consider $F \in L^2 (\mathbb{R}^3)$. 
\begin{eqnarray}
\fl  \| |F|^{1/2} f_n \| \leq  (\chi_{\{x| |r_i - r_j| < q\}} f_n, |F|
f_n)  + (\chi_{\{x| |r_i - r_j| \geq  q\}} f_n, |F| f_n) \\
\fl = (\chi_{\{x| |r_i - r_j| < q\}} f_n, |F|  f_n)  + (f_n, \chi_{\{x| |r_i
- r_j| \geq  q\}} |F|(H_0 +1)^{-1} (H_0 + 1)f_n) \\
\fl \leq \| \chi_{\{x| |r_i - r_j| < q\}} f_n\| \; \||F|f_n \| + \Bigl\|
\chi_{\{x| |r_i - r_j| \geq  q\}} |F| (H_0 +1)^{-1} \Bigr\| .  \label{l7:1}
\end{eqnarray}
The first term in (\ref{l7:1}) goes to zero
because
$|F(r_i -r_j)|$ is relatively $H_0$ bounded. 
The second term is an operator norm, which can be made as small as pleased by setting $q$ large
enough, see Lemma~5 in \cite{1}. 
\end{proof}

\ack 

The author would like to thank Prof. Walter Greiner for the warm hospitality at FIAS.  

\section*{References}

\end{document}